\documentclass[11pt]{article}  
\usepackage{bm}
\usepackage{amscd}
\usepackage{amsmath}
\usepackage{amssymb}
\usepackage{amsthm}
\usepackage{graphicx}
\usepackage{mathtools}
\usepackage{enumerate}
\usepackage{color}
\usepackage{cleveref} 

\setlength{\textwidth}{6.5in}
\setlength{\textheight}{8.5in}
\setlength{\oddsidemargin}{0pt}
\setlength{\evensidemargin}{0pt}
\setlength{\topmargin}{0pt}
\setlength{\marginparsep}{0pt}
\setlength{\marginparwidth}{1in}
\newtheorem{definition}{Definition}
\newtheorem{theorem}{Theorem}
\newtheorem{lemma}[theorem]{Lemma}


\newcommand{\ve}{\varepsilon}

\newcommand{\beq}[2]{\begin{equation}\label{#1}#2\end{equation}}

\newcommand\given[1][]{\:#1\vert\:}

\DeclarePairedDelimiter{\defaultDelim}{[}{]}

\DeclareMathOperator{\capPr}{\sf Pr}
\renewcommand{\Pr}[2][]{\capPr_{#1}\defaultDelim*{#2}}
\DeclareMathOperator{\capE}{\sf E}
\newcommand{\E}[2][]{\capE_{#1}\defaultDelim*{#2}}

\newcommand{\set}[1]{\left\{#1\right\}}

\DeclareMathOperator{\capExp}{\sf exp}
\renewcommand{\exp}[2]{\capExp_{#1}\left({#2}\right)}
\renewcommand{\exp}[2][]{\capExp_{#1}\brac{#2}}

\def\cA{{\mathcal A}}

\def\a{\alpha}

\def\e{\varepsilon}

\def\th{\theta}

\newcommand{\brac}[1]{\left(#1\right)}

\newcommand{\bfrac}[2]{\left(\frac{#1}{#2}\right)}
\crefformat{equation}{(#2#1#3)}

\def\cE{{\mathcal E}}

\begin{document}
\title{Balanced Allocation Through Random Walk}
\author{Alan Frieze\thanks{Department of Mathematical Sciences, Carnegie Mellon University, Pittsburgh PA15213. Research supported in part by NSF grant DMS0753472}\and Samantha Petti\thanks{School of Mathematics, Georgia Tech., Atlanta, GA30313. This material is based upon work supported by the National Science Foundation Graduate Research Fellowship under Grant No. DGE-1650044.}} 
\maketitle

\begin{abstract}
We consider the allocation problem in which $m \leq (1-\ve) dn $ items are to be allocated to $n$ bins with capacity $d$. The items $x_1,x_2,\ldots,x_m$ arrive sequentially and when item $x_i$ arrives it is given two possible bin locations $p_i=h_1(x_i),q_i=h_2(x_i)$ via hash functions $h_1,h_2$. We consider a random walk procedure for inserting items and show that the expected time insertion time is constant provided $\ve = \Omega\left(\sqrt{ \frac{ \log d}{d}} \right).$
\end{abstract}

\section{Introduction}
We consider the following allocation problems. We have $m$ items that are to be allocated to $n$ bins, where each bin has space for $d$ items. The items $x_1,x_2,\ldots,x_m$ arrive sequentially and when item $x_i$ arrives it is given two possible bin locations $p_i=h_1(x_i),q_i=h_2(x_i)$ via hash functions $h_1,h_2$. We shall for the purpose of this paper assume that $p_i\neq q_i$ for $i\in [m]$ and that $(p_i,q_i)$ is otherwise chosen uniformly at random from $[n]^2$. This model is explicitly discussed in Dietzfelbinger and Weidling \cite{DW}.

Suppose now that $m\leq d(1-\e)n$ where $m,n$ grow arbitrarily large and that $\e>0$ is small and independent of $n$. They prove the following:
\begin{enumerate}
\item \label{enum1} If $d\geq 1+\frac{\log(1/\e)}{1-\log 2}$ then w.h.p.\footnote{A sequence of events $(\cE_n,n\geq 0)$ is said to occur with high probability (w.h.p.) if $\lim_{n\to\infty}\Pr{\cE_n}=1$.} all the items can placed into bins. 
\item If $d> 90\log(1/\e)$ then the expected time for a Breadth First Search (BFS) procedure to insert an item is at most $(1/\e)^{O(\log d)}$.
\end{enumerate}
This model is related to a $d$-ary version of Cuckoo Hashing (Pagh and Rodler \cite{PR}) that was discussed in Fotakis, Pagh, Sanders and Spirakis \cite{FPSS}. Here there are $d$ hash functions and the bins are of size one. This latter paper also uses BFS to insert items.

Item insertion in both of these models can also be tackled via random walk. For $d$-ary Cuckoo Hashing, Frieze, Mitzenmacher and Melsted \cite{FMM} and Fountoulakis, Panagiotou and Steger \cite{FPS} gave $O((\log n)^{O(1)})$ time bounds for random walk insertion and more recently Frieze and Johansson \cite{FJ} proved an $O(1)$ time bound on random walk insertion, for $d$ sufficently large.

The authors of \cite{DW} ask for an analysis of a random walk procedure for inserting an item. They ask for bounds of $O(\log 1/\e)$ insertion time while maintaining $d=O(\log 1/\e)$. While we cannot satisfy these demanding criteria, in this note we are able to establish constant expected time bounds with a larger value of $d$. We first describe the insertion algorithm. We say a bin is saturated if it contains $d$ items. 

\medskip

\noindent{
{\bf Random Walk Insertion: RWI}
}
\begin{tabbing}
xxx\= xxx\= xxx\= \kill
{\bf For} $i=1$ to $m$ {\bf do}\\
{\bf begin}\+\\
Generate $p_i,q_i$ randomly from $[n]$\\
{\bf if} either of bins $p_i,q_i$ are not saturated, then assign item $i$ arbitrarily to one of them.\\
{\bf if} both bins $p_i,q_i$ are saturated then {\bf do}\+\\
{\bf begin}\\
Choose $b$ randomly from $\set{p_i,q_i}$.\\
{\bf repeat}\hspace{-70pt}{\bf A}\+\\
Let $x$ be a randomly chosen item from bin $b$.\\
Remove $x$ from bin $b$ and replace it with item $i$.\\
Let $c$ be the other bin choice of item $x$.\\
$b\gets c$.\-\\
{\bf until} bin $b$ is unsaturated.\-\\
Place item $x$ in bin $b$.\\
{\bf end}\-\\
{\bf end}
\end{tabbing}
Let $r_i$ denote the number of steps in loop A of algorithm RWI. Then,
\begin{theorem} \label{main} Let $m \leq (1-\ve) dn$. Then for some absolute constant $M>0$,
\beq{th1}{
\E{r_i}\leq \frac{4M}{\e^2}\text{ w.h.p. for }i\in [m]\text{ provided }\ve \geq \sqrt{ \frac{ M \left( \log( 4d) +1\right)}{d}}.
}
\end{theorem}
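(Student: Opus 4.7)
\medskip

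\noindent\textbf{Proof proposal.}
The plan is to condition on the configuration of the bins at the start of iteration $i$, and to bound the expected length of the random walk started from one of $p_i,q_i$. Let $S$ denote the set of saturated bins and $U=[n]\setminus S$ the unsaturated ones at this moment. A first trivial observation is that since every saturated bin holds exactly $d$ items and the total number of items inserted so far is at most $m\leq(1-\e)dn$, we have $|S|\,d\leq(1-\e)dn$, and therefore $|U|\geq\e n$.

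The technical core is a structural claim: with high probability over the hash functions (and the algorithm's prior coin flips), every saturated bin $b$ at every intermediate moment of the algorithm contains at least $\e d/2$ items whose other bin choice lies in the current $U$. Granting this, a walk step picks a uniformly random item from a saturated bin $b$, so the probability that the next bin is in $U$ is at least $(\e d/2)/d=\e/2$. Hence $r_i$ is stochastically dominated by a geometric random variable with parameter $\e/2$, so $\E{r_i}\leq 2/\e\leq 4M/\e^2$ under the assumed lower bound on $\e$.

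To establish the structural claim, the strategy is a Chernoff plus union bound argument. For a fixed bin $b$ holding items $x_1,\ldots,x_d$ with other choices $c_1,\ldots,c_d$, if the $c_j$'s behaved as independent uniform samples from $[n]$, the count $|\{j:c_j\in U\}|$ would be approximately binomial $B(d,|U|/n)\geq B(d,\e)$, which by Chernoff deviates below $\e d/2$ with probability $\exp(-\Omega(\e^2 d))$. The assumption $\e^2 d\geq M(\log(4d)+1)$ makes this at most $(4d)^{-M}$, which is small enough to union-bound over the at most $T=4M/\e^2\leq 4d$ bins encountered along the walk. One must also check that the configuration changes negligibly during the walk itself: each step replaces a single item in one visited bin, so an initial structural bound of $\e d/2$ degrades to $\e d/2 - O(1)$ over the $O(1/\e^2)$ steps, still $\Omega(\e d)$ after mild slackening of constants.

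The main obstacle is exactly the point where the Chernoff bound is invoked: the items sitting in a saturated bin $b$ are \emph{not} an unbiased sample of the items that hash to $b$. The algorithm preferentially routes an item to its unsaturated choice, so conditioning on $x_j\in b$ biases $c_j$ toward $S$. The proof must quantify this bias and show it does not spoil the concentration, e.g.\ by deferring the random choice of the ``other'' coordinate of each item to the last possible moment, by coupling the RWI dynamics with a surrogate process in which allocation and the choice of $c_j$ are independent, or by induction on the sequence of insertions combined with a self-bounding argument for the number of ``bad'' bins. Controlling this conditional distribution of other choices is where the real work lies; the rest of the proof is a routine Chernoff-plus-geometric calculation.
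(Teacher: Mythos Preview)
Your approach is genuinely different from the paper's, and as written it has a real gap --- in fact two.

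\textbf{The bias issue.} You correctly identify that the items residing in a saturated bin $b$ are not a uniform sample of items hashing to $b$: the algorithm preferentially sends items to unsaturated choices, so conditioning on $x_j\in b$ biases $c_j$ toward $S$. You then list possible fixes (deferred decisions, coupling, self-bounding induction) but do not carry any of them out. This is the heart of the matter, and none of those suggestions is obviously workable here; for instance, deferred decisions fail because the other choice $c_j$ of an item already in $b$ may have been \emph{used} earlier (the item may have been evicted from $c_j$), so it cannot be re-randomised.

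\textbf{The union bound.} Even setting the bias aside, your quantitative plan does not close. You propose to union-bound over ``the at most $T=4M/\e^2$ bins encountered along the walk'', but $T$ is precisely the quantity you are trying to control, and which bins are visited is adaptive. To run the geometric-domination argument you actually need the escape property to hold \emph{simultaneously for every currently saturated bin}; that is up to $n$ bins. Your Chernoff bound gives per-bin failure probability $\exp{-\Omega(\e^2 d)}\le (4d)^{-\Omega(1)}$, which under the hypothesis $\e^2 d=\Theta(\log d)$ is only polynomially small in $d$, not in $n$. So the global union bound does not yield a w.h.p.\ statement unless $d$ grows with $n$, which is not assumed.

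\textbf{What the paper does instead.} The paper sidesteps both issues by never analysing the contents of individual bins. It defines a set $S\supseteq\{\text{saturated bins}\}$ purely in terms of the random digraph $D$ (in-degrees and a closure under ``has two neighbours in $S$''), so that membership in $S$ is independent of the algorithm's coin flips. It then shows, via a first-moment computation on spanning trees, that the components of the induced graph $G_S$ have sizes whose expected count decays like $\frac{n}{k^2}e^{-\e^2 dk/M}$, are w.h.p.\ of size $O(\log n)$, and are unicyclic. Since the replacement walk cannot leave its component of $G_S$ until it hits an unsaturated vertex, and a directed walk in a size-$k$ unicyclic component lasts at most $2k$ steps in expectation, summing over $k$ gives $\E{r_i}\le 4M/\e^2$. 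The key point is that this argument uses only first-moment bounds on subgraph counts in the random graph $G$, avoiding any need to reason about the conditional law of ``other choices'' given the algorithm's history.
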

In the analysis below, we take $M=96$. It goes without saying that we have not tried to optimze $M$ here.

There are two sources of randomness here. The random choice of the hash functions and the random choices by the algorithm. The w.h.p. concerns the assignment of items to bins by the hash functions and the $\E{r_i}$ is then the conditional expectation given these choices.
\section{Graphical Description} \label{technique}
We use a digraph $D'$ to represent the assignment of items to bins. Each bin is represented as a vertex in $D'$ and item $i$ is represented as a directed edge $(p_i,q_i)$ or $(q_i,p_i)$ that is oriented toward its assigned bin. We say a vertex is {\em saturated} if its in-degree in $D'$ is $d$. As the algorithm is executed, we in fact build two digraphs $D$ and $D'$ simultaneously. 

We now describe the insertion of an item in terms of $D,D'$. Let $x$ and $y$ denote the two randomly selected bins for an item. We place a randomly oriented edge between $x$ and $y$ in $D$. If $x$ and $y$ are unsaturated in $D'$, then we place the edge in the same orientation as in $D$. If $x$ or $y$ is saturated in $D'$, then we place the edge in $D'$ according to the algorithm RWI, which may require flipping edges in $D'$. Repeat for all items. Note that $D$ is a random directed graph with $(1-\ve)dn$ edges. The undirected degree of each vertex in $D$ is the same as in $D'$. However, the directed degrees will vary.  Let $D_t$ and $D'_t$ denote the respective graphs after $t$ edges have been inserted. 

We compute the expected insertion time after $(1-\ve)dn$ items have been added by analyzing $D'$. The expected time to add the next item is equal to the expected length of the following random walk in $D'$. Select two vertices $x$ and $y$. If either is unsaturated no walk is taken, so we say the walk has length zero. Otherwise, pick a vertex at random from $\{x, y\}$ and walk ``backwards" along edges oriented into the current vertex until an unsaturated vertex is reached. We call this the ``replacement walk."

Let $G$ denote the common underlying graph of $D,D'$ obtained by ignoring orientation. In order to compute the expected length of the replacement walk, we analyze the the structure of the graph $G_S$ induced by a set $S$ which contains all saturated vertices in $G$. In \Cref{sat}, we show that the expected number of components of size $k$ among saturated vertices decays geometrically with $k$ and that each component is a tree or contains precisely one cycle. 
 In \Cref{time} we show that since the components of $G_S$ are sparse and the number of components decays geometrically with size, the expected length of a replacement walk is constant. 

\section{Saturated Vertices}\label{sat}
In this section we describe the structure induced by $G$ on the set of saturated vertices.
First we define a set $S$ that is a superset of all saturated vertices. 
\begin{definition} \label{s} Let $A$ be the set of vertices of $D$ with in-degree at least $d-1$ in $D$ and $T_0= \emptyset$. Given $ A, T_0, \dots T_k$, let $T_{k+1}$ be all the vertices of $V \setminus \left( A \cup T_0 \cup T_1 \cup \dots \cup T_k \right)$  with at least two neighbors in  $A \cup T_0 \cup T_1 \cup \dots \cup T_k$. Let  $T= \bigcup T_i$ and $S= A \cup T$.  \end{definition}

\begin{lemma} \label{saturated}
The set $S$ defined above contains all saturated vertices. 
\end{lemma}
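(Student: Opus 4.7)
The plan is to prove \Cref{saturated} by induction on the number $t$ of items that have been inserted, showing that the set $U_t$ of saturated vertices in $D'_t$ is always contained in $S_t$, the set $S$ from \Cref{s} applied to $D_t$. Two monotonicity facts will drive the induction. First, saturation is monotone in time, because each iteration of loop A simultaneously removes one item from and deposits one item into the current bin, leaving its load unchanged---only the terminal bin of a walk actually gains an item. Second, the bootstrap-style closure defining $S$ is monotone in $D$: inserting an edge can only enlarge $A$ and the 2-neighbor rule preserves containment, so $S_{t-1}\subseteq S_t$.

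The core combinatorial ingredient is the following sub-claim, which I would establish separately: \emph{if at time $t$ the edge $\{v,u\}$ is oriented away from $v$ in $D_t$ and into $v$ in $D'_t$ (an ``up-flip at $v$''), then $u\in U_t$}. I would prove it by tracking the edge's history from its insertion time $s$, using that its orientation in $D$ is frozen at $s$; say it is into $u$. If $D'_s$ is into $v$ as well, then the algorithm placed the item in $v$ contrary to $D_s$; inspecting the three possible insertion scenarios (both endpoints unsaturated, exactly one saturated, or both saturated) shows this is possible only when $u$ was already saturated at time $s$. If instead $D'_s$ is into $u$, then since $D'_t$ is into $v$ the $D'$-orientation of the edge must flip an odd number of times between $s$ and $t$; each flip of $\{u,v\}$ from ``into $u$'' to ``into $v$'' can occur only while the random walk sits at $u$, which forces $u$ to be saturated at that moment. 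Either way, $u$ is saturated at some moment $\leq t$, so by monotonicity $u\in U_t$.

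Granted the sub-claim, the inductive step is short. Assume $U_{t-1}\subseteq S_{t-1}$. At most one vertex $w$ joins $U_t$ during the insertion of item $t$, namely the bin where the item is ultimately deposited. If $d_w^-(D_t)\geq d-1$ then $w\in A_t\subseteq S_t$ and we are done. Otherwise $d_w^-(D_t)\leq d-2$; combined with $d_w^-(D'_t)=d$ this forces at least two edges incident to $w$ to be up-flipped at $w$. By the sub-claim the other endpoints of those two edges lie in $U_t$, and since $w$ is the only newly saturated vertex they actually lie in $U_{t-1}\subseteq S_{t-1}\subseteq S_t$; the 2-neighbor rule defining $S_t$ then adds $w$ to $S_t$, closing the induction.

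I expect the main obstacle to be the clean case analysis inside the sub-claim, in particular making precise the observation that a walk-induced flip of $\{u,v\}$ in a given direction can only happen while the walk is visiting the appropriate endpoint. Once the sub-claim is settled, the rest amounts to bookkeeping of in-degree discrepancies between $D_t$ and $D'_t$ together with the monotonicity of saturation and of $S$.
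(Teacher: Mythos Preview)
Your argument is correct and follows essentially the same inductive scheme as the paper: one vertex $w$ becomes saturated at the step, and either its $D$-in-degree is already $\geq d-1$ (so $w\in A$), or at least two incident edges are oriented into $w$ in $D'$ but out of $w$ in $D$, whence their other endpoints were saturated earlier and thus lie in $S$.

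Two small remarks on packaging. First, the paper works directly with the final set $S$ (built from the final digraph $D$) and shows $U_t\subseteq S$ for all $t$; this sidesteps your monotonicity argument $S_{t-1}\subseteq S_t$ entirely, since the target of the induction never changes. Your route via the time-indexed $S_t$ is fine, just slightly longer. Second, your sub-claim is actually sharper than what the paper states: the paper merely asserts that ``the orientation of an edge differs in $D$ and $D'$ only if one of its ends is saturated'' and then jumps to ``$u$ and $w$ must be saturated,'' which tacitly uses exactly the directional version you prove (that the endpoint towards which $D$ points is the saturated one). So your careful case analysis of the sub-claim fills in a step the paper leaves implicit.
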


\begin{proof}  We prove the statement by induction on $S_t$ the set of saturated vertices after the $t^{th}$ edge is added. Since $S_0 = \emptyset$, $S_0\subseteq S$ vacuously. Assume $S_t \subseteq S$. Note that the addition of a single edge can cause at most one vertex to become saturated. If $S_t = S_{t+1}$, $S_{t+1}\subseteq S$ trivially. For the other case, let $v$ be the vertex that became saturated as a result of the addition of the $(t + 1)^{st}$ edge. If $v$ has in-degree at least $ d-1$ in $D$ then $v\in  A \subseteq S$. Otherwise, there must exist two edges $\{u, v\}$ and $\{w, v\}$ that are oriented out of $v$ in $D$ and are oriented into $v$ in $D'$ at time $t + 1$. Since the orientation of an edge differs in $D$ and $D'$ only if one of its ends is saturated, it follows that $u$ and $w$ must be saturated at time $t$. By the inductive hypothesis, $u, w \in S$. Therefore, since $v$ is adjacent to two vertices of $S$, $v \in S$. It follows $S_{t+1} = S_t \cup \{v\} \subseteq S$.
\end{proof}

Next we analyze the structure that $G$ induces on $S$. Let $G_S$ be the subgraph of $G$ induced by $S$. The following lemma states that at least half of the vertices of each component of $G_S$ are in $A$. This fact is crucial for proving \Cref{num comps}, which gives an upper bound on the number of components of $G_S$  of size $k$ that decays geometrically with $k$. 

\begin{lemma}
Let $S= A\cup T$ as in \Cref{s} and let $G_S$ be the graph on $S$ induced by $G$. Each spanning tree of a component in $G_S$ has the following form:
\begin{enumerate}
\item A set $K$ of $k$ vertices.
\item A set $L=K\cap A$ of size $\ell$. We suppose that $L$ induces a forest with $s$ components and $\ell-s$ edges.
\item A set $K\setminus L=K \cap T=\set{v_1,v_2,\ldots,v_{k-\ell}}$ where $v_i$ is saturated by the algorithm after $v_{i-1}$,  $v_i$ has $s_i\geq 2$ neighbors in $L\cup \set{v_1,v_2,\ldots,v_{i-1}}$, and $s_1+s_2+\cdots+s_{k-\ell}=k-1-(\ell-s)$.
\item $\ell\geq \frac{k+2}{2}.$
\end{enumerate}
\end{lemma}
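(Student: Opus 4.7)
The plan is to read off items 1--3 from the definitions of $A$ and $T$ together with a spanning-tree decomposition, and then to derive item 4 as an arithmetic consequence of items 2 and 3.

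First, I would fix a component $K$ of $G_S$ with $|K|=k$, set $L:=K\cap A$, and order the vertices of $K\cap T$ as $v_1,\ldots,v_{k-\ell}$ by the index of the layer $T_j$ in which they are introduced (breaking ties within a single layer arbitrarily). By \Cref{s}, whenever $v_i\in T_j$, the vertex $v_i$ has at least two neighbors in $A\cup T_0\cup\cdots\cup T_{j-1}$; each such neighbor lies in $S$ and is adjacent to $v_i$, so it lies in $K$, and by the ordering it lies in $L\cup\{v_1,\ldots,v_{i-1}\}$. This gives $s_i\ge 2$ and yields item 1 and the ordering/multiplicity part of item 3.

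Next, I would fix a spanning tree $F$ of the component; being a subgraph of a tree, $F|_L$ is itself a forest, with some number $s$ of components, contributing $\ell-s$ edges to $F$ and leaving $k-1-(\ell-s)$ tree edges incident to $K\cap T$. For the sum identity, I would invoke the assertion foreshadowed in \Cref{technique} that $K$ is a tree (the unicyclic case being handled analogously with a one-edge adjustment), so that $F=G_S[K]$ and every backward $G$-edge from $v_i$ to $L\cup\{v_1,\ldots,v_{i-1}\}$ is a tree edge. Partitioning the non-$L$ tree edges by the later endpoint then produces
\[
s_1+s_2+\cdots+s_{k-\ell}\;=\;k-1-(\ell-s),
\]
completing items 2 and 3.

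Finally, for item 4 I would combine $s_i\ge 2$ with the identity to obtain
\[
2(k-\ell)\;\le\;\sum_{i=1}^{k-\ell}s_i\;=\;k-1-(\ell-s),
\]
which rearranges to $s\ge k-\ell+1$; pairing this with the forest bound $s\le\ell$ forces $2\ell\ge k+1$. The main obstacle I anticipate is the final half-integer step from $2\ell\ge k+1$ up to the stated $2\ell\ge k+2$: the extremal configuration is $s=\ell$ with every $s_i=2$, in which case the spanning tree is bipartite between $L$ and $K\cap T$ with each $v_i$ of degree exactly $2$ in $F$, and a short parity/handshaking check on $\sum_i s_i$ versus the number of tree edges incident to $L$ should either rule this out or yield the stronger $s\le \ell-1$, closing the gap.
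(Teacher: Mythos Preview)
Your approach mirrors the paper's: items (1)--(3) are read off from the definitions, and item (4) comes from the chain $2(k-\ell)\le \sum_i s_i = k-1-(\ell-s)$. The paper then writes $k-\ell \le s-1 \le \ell-2$ to conclude $\ell\ge (k+2)/2$, whereas you pair $k-\ell\le s-1$ with the forest bound $s\le\ell$ to obtain only $2\ell\ge k+1$, and then propose a parity/handshaking argument to recover the missing unit.

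That final step will not succeed. The extremal configuration you describe ($s=\ell$ and every $s_i=2$) is realizable: take $k=3$, $L=\{a,b\}\subset A$, and a single $v_1\in T$ adjacent in $G_S$ to both $a$ and $b$, so the component is the path on $a,v_1,b$. Here $\ell=2$ while $(k+2)/2=5/2$, so item (4) fails as written. In other words, the paper's step $s-1\le \ell-2$ is asserted without justification and is not valid in general; the honest conclusion from the displayed chain is only $\ell\ge(k+1)/2$. This is harmless downstream, since the sole use of (4), in \Cref{num comps}, is the range $\ell\ge k/2$ in the summation. One further caution: your appeal to the tree/unicyclic structure of components to force $F=G_S[K]$ (and hence the identity $\sum s_i=k-1-(\ell-s)$ with $s_i\ge 2$) is circular, since that structure is proved in \Cref{typical}, which relies on \Cref{num comps} and thus on the present lemma. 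The paper sidesteps this by simply asserting (1)--(3); in effect it reads the identity with $s_i$ meaning the back-degree in the spanning tree, and the inequality $s_i\ge2$ with $s_i$ meaning the back-degree in $G_S$, tacitly conflating the two.
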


\begin{proof} 
As described in the proof of \Cref{saturated} each vertex in $T$ is adjacent to at least two vertices in $S$. Therefore $(1),(2), (3)$ hold. For $(4)$, note
$$2(k-\ell)\leq s_1+s_2+\cdots+s_{k-\ell}=k-1-(\ell-s)$$
which implies  
$$k-\ell\leq s-1\leq \ell-2,$$
and (4) follows directly. 
\end{proof}

\begin{lemma} \label{num comps}
Let $S$ be as in \Cref{s} and let $M=96$. When 
$$\sqrt{ \frac{ M\left( \log( 4d) +1\right)}{d}} \leq \ve,$$
the expected number of components of $G_S$ of size $k$ is bounded above by 
$$ \frac{n}{k^2} \exp{-\frac{\ve^2dk}{M}}.$$ 
\end{lemma}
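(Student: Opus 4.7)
The plan is to apply a first-moment union bound, guided by the structural decomposition of the preceding lemma, together with a Chernoff tail estimate. Any component $C$ of $G_S$ of size $k$ is connected in $G$ and so contains a spanning tree $\tau$ on $k-1$ edges, while the previous lemma guarantees that $L := C \cap A$ has size $\ell \geq \lceil (k+2)/2 \rceil$. Hence
\[
\E{\#\text{ components of }G_S\text{ of size }k} \leq \sum_{|K|=k}\sum_\tau \sum_{\substack{L\subseteq K\\ |L|\geq (k+2)/2}} \Pr{\tau\subseteq G,\ L\subseteq A},
\]
where $K$ ranges over $k$-subsets of $[n]$ and $\tau$ over labeled spanning trees on $K$. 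By Cayley's formula and a crude $\binom{k}{\ell}$-bound, the number of triples is at most $\binom{n}{k}\, k^{k-2}\, 2^k$.

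For each fixed triple I would control the joint probability by an injection argument that decouples the two events. Enumerate injections $\phi:E(\tau)\to [m]$ specifying which items are to realize the tree edges (at most $m^{k-1}$ of them); the probability that the items in $\phi(E(\tau))$ carry exactly the $k-1$ prescribed unoriented pairs is $(2/n^2)^{k-1}$. Given $\phi$, the remaining $m-k+1$ items are independent and uniform on $[n]^2$ with independent random orientations. Define
\[
W' := \#\bigl\{\,i\notin \phi(E(\tau)) : \text{head of item }i\text{ lies in }L\,\bigr\} \sim \mathrm{Bin}(m-k+1,\ \ell/n),
\]
which has mean $\approx (1-\ve)d\ell$. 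Since the $\phi$-items can contribute at most $k-1$ in-edges to $L$ in total, $L \subseteq A$ forces $W' \geq \ell(d-1) - (k-1)$, a deviation of order $\ve d \ell$ above the mean. A standard Chernoff bound gives $\Pr{W' \geq \ell(d-1)-(k-1)} \leq \exp{-\ve^2 d\ell/2}$, and summing over $\phi$ yields
\[
\Pr{\tau \subseteq G,\ L \subseteq A} \leq (2m/n^2)^{k-1} \exp{-\ve^2 d\ell/2} \leq (2d/n)^{k-1} \exp{-\ve^2 d\ell/2}.
\]

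Substituting into the union bound and using $\binom{n}{k}\leq (en/k)^k$, $k^k\leq e^k k!$, and $\ell\geq (k+2)/2$ leaves an upper bound of the form $\tfrac{n}{k^2}(4ed)^k \exp{-\ve^2 dk/4}$. The hypothesis $\ve^2 d \geq 96(\log(4d)+1) = 96\log(4ed)$ then absorbs the factor $(4ed)^k = \exp{k \log(4ed)}$ into the exponential with room to spare, since $\tfrac{1}{4} - \tfrac{1}{96} > \tfrac{1}{96}$, producing the claimed bound $\tfrac{n}{k^2}\exp{-\ve^2 dk/96}$.

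The main obstacle is the dependence between the events $\{\tau\subseteq G\}$ and $\{L \subseteq A\}$, which both draw on the same random items and are in fact positively correlated. The injection argument resolves this cleanly: after fixing $\phi$, the unoriented pairs of the $\phi$-items are frozen while $W'$ depends only on the independent non-$\phi$ items, so a single Chernoff bound suffices. Everything else is a routine combination of Cayley's formula, Chernoff, and elementary counting.
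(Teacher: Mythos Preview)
Your proposal is correct and follows essentially the same route as the paper: bound components by spanning trees, use the structural lemma to get $\ell\geq (k+2)/2$, bound $\Pr{\tau\subseteq G}$ by $(2d/n)^{k-1}$, and control $\Pr{L\subseteq A}$ via a Chernoff bound on the number of non-tree items whose $D$-head lands in $L$; the paper conditions on $\{\tau\subseteq G\}$ and invokes stochastic domination (``negative correlation''), whereas your injection over $\phi:E(\tau)\to[m]$ is a cleaner way to decouple the two events. Two small imprecisions to tidy up: the per-item edge probability is $1/\binom{n}{2}=2/(n(n-1))$ rather than $2/n^2$ (harmless since $(1-\ve)/(n-1)\le 1/n$), and your Chernoff exponent $\ve^2 d\ell/2$ should really come from the form $\Pr{X\ge(1+\delta)\mu}\le\exp(-\delta^2\mu/(2+\delta))$, which under the hypothesis (so $\ve d$ is large and the ``$-3$'' is negligible) indeed gives an exponent $\approx \ve^2 d\ell/(2-\ve)\ge \ve^2 d\ell/2$; the paper, using the cruder $\exp(-\delta^2\mu/3)$ and a separate case $\ve>1/2$, arrives at $\exp(-\ve^2 dk/24)$ before absorbing $(4ed)^k$, but both reach the same final bound with $M=96$.
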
 

\begin{proof}
We use an upper bound on the expected number of spanning trees of size $k$ in $G_S$ as an upper bound on the number of components of $G_S$ of size $k$. 

Let $T$ be a tree on a specified labeled set of $k$ vertices, and let $L$ be a specified subset of vertices of $V(T)$ of size $\ell$. Let $\cA_{T,L}$ be the event that $T$ is present in $G$ and all vertices in $L$ are in $S$.  Let $I(L)$ be the sum of the in-degrees of the vertices in $L$, and let $I^*(L)$ be $I(L)$ minus the number of edges in $T$ oriented into a vertex in $L$. We compute
\begin{align} 
\Pr{ \cA_{T,L} } &\leq \Pr{ T \text{ appears in } G} \Pr{  I(L) \geq \ell (d-1)\given T \text{ appears in } G }\nonumber \\
&\leq \Pr{ T \text{ appears in } G} \Pr{ I^*(L) \geq \ell (d-1)-(k-1) \given T \text{ appears in } G }\label{middle}\\
&\leq \bfrac{2d}{n}^{k-1} \exp{-\frac{\ve^2dk}{96}} \label{final}
\end{align}
{\bf Explanation of \Cref{middle}.} If the tree $T$ is present in $G$, then $I(L)- I^\ast(L)$ is at most $k-1$. Therefore $I^\ast(L)$ must be at least $\ell (d-1)-(k-1)$.

\noindent {\bf Explanation of \Cref{final}.} The term $\bfrac{2d}{n}^{k-1}$ can be explained as follows: Let $m=(1-\e)dn$ be the number of pairs offered to algorithm RWI. Each pair has probability $1/\binom{n}{2}$ of being a particular edge. The probability that $G$ contains $k-1$ given edges is then at most 
$$m^{k-1}\bfrac{2}{n(n-1)}^{k-1}=\bfrac{2(1-\e)d}{n-1}^{k-1}\leq \bfrac{2d}{n}^{k-1}.$$ 
The term $\exp{-\frac{\ve^2dk}{96}}$ can be explained as follows: $I^\ast(L)$ is dominated by the sum of $\ell(n-\ell)$ independent Bernouilli random variables, $\xi_{u,v}$, each of which corresponds to an ordered pair $(u, v)$ where $u\notin L,v \in L$ and $\set{u,v} \not \in T$.  There are at most  $\ell (n-1) -(k-1)$ such pairs.  Here $\Pr{\xi_{u,v}=1}\leq \frac{m}{2\binom{n}{2}}=\frac{(1-\ve)d}{n-1}$ bounds the probability that the edge $\set{u,v}$ exists in $G$ and oriented from $u$ to $v$ in $D$. The existence of edges are negatively correlated since the number of edges is bounded above.Therefore, $I^\ast(L)$ is dominated by $X_\e \sim Bin \left(\ell( n -1) , \frac{(1-\ve)d}{n-1} \right)$. 

Thus if
$$p^*=\Pr{ I^*(L) \geq \ell (d-1)-(k-1)\given T \text{ appears in } G }$$
then
\begin{align*}
p^* &\leq \Pr{ X_\e \geq \ell (d-1)-(k-1)}\nonumber\\
&\leq \Pr{ X_\e  \geq \ell\left(d  -3\right) }\nonumber\\
&\leq \Pr{ X_\e \geq \left( 1+ \theta\right) \E{X_\e}},
\end{align*}
where $\theta=\frac{\e d-3}{(1-\e)d}$. Note that $\th\leq1$ iff $\e\leq \e_0=\frac{d+3}{2d}$.

The Chernoff bounds then imply that if $\e\leq 1/2\leq \e_0$ then
$$p^*\leq \exp{-\frac13\cdot\bfrac{\e d-3}{(1-\e)d}^2\ell d\brac{1-\e}}\leq e^{-\e^2\ell d/12}\leq e^{-\e^2k d/24}$$
since $\e d\geq 6$. When $\e>1/2$ we see that $X_\e$ is dominated by $X_{1/2}$ and then replacing $\e^2$ by $(\e/2)^2$, we have \Cref{final}.

We now compute the expected number of trees of size $k$. Let $Z_k$ denote the number of spanning trees in $G_S$ with $k$ vertices. Then
\begin{align*}
\E{Z_k}&\leq \binom{n}{k}k^{k-2}\sum_{\ell\geq k/2}\binom{k}{\ell}\Pr{\cA_{T,L}}\\
&\leq \frac{(ne)^k}{k^2} 2^{k-1} \bfrac{2d}{n}^{k-1} \exp{-\frac{\ve^2dk}{96}}\\
&=\frac{n}{k^2} \exp{ k \left( \log(4d) +1- \frac{\ve^2 d}{96} \right)}\\
&\leq \frac{n}{k^2} \exp{-\frac{\ve^2dk}{M}}.
\end{align*}
\end{proof}

Finally, we give a high probability bound on the size of the maximum component in $G_S$ and show that with high probability each component of $G_S$ contains at most one cycle. 

\begin{lemma} \label{typical} 
Let $ \ve \geq \sqrt{ \frac{ M  \left( \log( 4d) +1\right)}{d}}$ and $S$ be as in \Cref{s}. Then, w.h.p., the maximum size of a component of $G_S$ is at most
$$k_0=\a\log n\text{ where } \a=\frac{M}{\e^2d},$$ 
and each component contains at most one cycle. 
\end{lemma}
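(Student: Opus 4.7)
The plan is to derive both conclusions from \Cref{num comps} via first-moment (Markov) bounds, handling the two claims essentially independently.

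For the maximum component size, I would apply Markov's inequality to the number of components of size at least $k_0$. Summing the bound from \Cref{num comps},
\[
\Pr\left[\exists\text{ a component of }G_S\text{ of size}\geq k_0\right]\leq \sum_{k\geq k_0}\frac{n}{k^2}\exp{-\frac{\ve^2 dk}{M}}.
\]
The choice $k_0=\alpha\log n$ with $\alpha=M/(\ve^2d)$ makes the exponential at $k=k_0$ exactly $1/n$, so the $k=k_0$ term equals $1/k_0^2$. Writing $\beta=\ve^2d/M$, the tail is dominated by a geometric series with ratio $e^{-\beta}$, and the hypothesis $\ve^2 d\geq M(\log(4d)+1)$ gives $\beta\geq\log(4d)+1\geq\log 4 +1$, a constant bounded away from zero. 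Hence the sum is $O(1/k_0^2)=o(1)$ as $n\to\infty$, which yields the first assertion.

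For the second claim, I would bound the expected number of components of $G_S$ that contain at least two independent cycles, equivalently connected subgraphs on $k$ vertices with at least $k+1$ edges. I'd adapt the counting in \Cref{num comps}: instead of a spanning tree ($k-1$ edges), I count connected labeled graphs on $k$ vertices with exactly $k+1$ edges (the surplus-$2$ case suffices since one can always restrict to a bicyclic subgraph); the number of such labeled graphs is at most $k^{k-2}\cdot k^4$ (a spanning tree times a choice of two extra edges). Each additional edge contributes another factor of at most $2d/n$ to the probability, and the saturation constraint still enforces $\ell\geq k/2$ vertices in $A$, giving the same $\exp(-\ve^2 dk/M)$ factor. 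Thus the expected number of components of size $k$ with $\geq 2$ cycles is at most
\[
\binom{n}{k}\,k^{k+2}\,2^k\bfrac{2d}{n}^{k+1}\exp{-\frac{\ve^2 dk}{M}}\leq \frac{C d^2 k^2}{n}\exp{ k\left(\log(4d)+1-\frac{\ve^2 d}{M}\right)}
\]
for an absolute constant $C$. Under the hypothesis on $\ve$, the inner exponent is non-positive, so this is $\leq Cd^2k^2\exp(-\ve^2 dk/M)/n$. Summing over $k\geq 3$ gives a total of $O(d^2/n)$, which is $o(1)$, so w.h.p.\ no component of $G_S$ has two or more cycles.

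The only non-routine obstacle is bookkeeping in the second part: one must verify that the saturation argument in \Cref{num comps} (the $\ell\geq k/2$ and the exponential savings from $I^\ast(L)$) passes through unchanged when the counted subgraph is a connected graph with $k+1$ edges rather than a tree with $k-1$ edges. This works because the structural lemma only requires the subgraph to be connected and to span the vertex set of the component, so any spanning connected subgraph suffices; the extra edges only affect the combinatorial factor and the power of $2d/n$, leaving the concentration inequality for $I^\ast(L)$ intact. The two bounds together give the lemma.
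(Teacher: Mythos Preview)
Your argument for the maximum component size is exactly the paper's: sum the bound of \Cref{num comps} over $k>k_0$ and observe the result is $o(1)$.

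For the second assertion your approach works but is more elaborate than needed. The paper does not re-run the saturation argument at all. Having already established in the first part that every component of $G_S$ has at most $k_0=\a\log n$ vertices, it suffices to rule out \emph{any} vertex set $Q$ in $G$ with $|Q|\le k_0$ and $e(Q)\ge |Q|+1$. This is a plain first-moment bound in the ambient random graph $G$, with no reference to $S$ or $A$:
\[
\sum_{k\le k_0}\binom{n}{k}k^{k+2}\bfrac{2d}{n}^{k+1}\le \frac{2d}{n}\sum_{k\le k_0}k^2(2de)^k=o(1),
\]
the last step using $k_0=O(\log n)$ together with the hypothesis on $\e$ to ensure $(2de)^{k_0}=n^{c}$ with $c<1$. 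This sidesteps entirely the bookkeeping obstacle you flagged. Your route does go through --- conditioning on $k+1$ edges instead of $k-1$ only shifts $I(L)-I^*(L)$ by two, which is negligible in the concentration step --- and it has the minor virtue of not relying on the first part. One small slip: from ``the inner exponent is non-positive'' you cannot directly extract a surviving factor $\exp{-\e^2dk/M}$; you need the hypothesis to provide strict slack (e.g.\ by doubling $M$), which is purely a matter of adjusting constants. The paper's own proof of \Cref{num comps} makes the same informal leap.
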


\begin{proof}
Let $Z_k$ be the expected number of components of $G_S$ of size $k$. We use  \Cref{num comps} to compute 
$$\Pr{  \exists \text{ a component of size $>k_0$ }} \leq \sum_{k_0<k<n} \E{ Z_k} \leq n\sum_{k_0<k<n} k^{-2}\exp{-\frac{\ve^2dk}{M}}= o(1).$$ 
It follows that w.h.p. the maximum size of a component of $G_S$ is at most $k_0$.

Next we show that  w.h.p. each component of $G_S$ contains at most one cycle. Let $Y_k$ be the number of sets $Q$ of size $k$ with $e(Q) \geq |Q|+1$.   We bound the number structures on $k$ vertices consisting of a tree plus two edges by $k^{k+2}$ and compute 
\begin{align*}
\Pr{ \exists\text{$Q$: $|Q| \leq k_0$, $e(Q) \geq |Q|+1$}}&\leq \sum_{k \leq k_0} \E{Y_k}\\
&\leq \sum_{k\leq k_0}\binom{n}{k}k^{k+2}\bfrac{2d}{n}^{k+1}\\
&\leq \frac{2d}{n}\sum_{k\leq k_0}k^2(2de)^k\\
&=o(1).
\end{align*}
\end{proof}

\section{Expected insertion time}\label{time}

Recall from \Cref{technique} the definition of a replacement walk. To bound the the expected length of the replacement walk, we first show that a random walk on $G'$ that starts in a size $k$ component of $G_S$ will remain in the component for at most $2k$ steps in expectation. 

\begin{lemma} \label{walk bound}
Let  $G$ satisfy the conditions of \Cref{typical}, and let $S$ be as in \Cref{s}. 
Consider a random walk on $G'$ begining in a size $k$ component of $G_S$. Then expected number of steps this random walk takes before leaving $G_S$ is bounded above by $2 k $.  
\end{lemma}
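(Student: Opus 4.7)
The plan is to analyze $F(v) = E[T \mid X_0 = v]$ via its natural recurrence. Writing $B_v = |N^-_{D'}(v) \cap C|$ for the number of in-neighbors of $v$ in $D'$ that lie inside the component $C$, the walk picks one of $d$ in-slots of $v$ uniformly, so
\[
F(v) \;=\; 1 \,+\, \frac{1}{d}\sum_{u \in N^-_{D'}(v)\cap C} F(u).
\]
By \Cref{typical} the component $C$ is either a tree or contains exactly one cycle; in either case $e(C)\leq k$, and summing the in-degrees within $C$ gives the structural bound $\sum_{v\in C} B_v \leq k$. In particular $B_v \leq \min(d,k-1)$ for every $v\in C$.

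First I would try the one-step argument at the argmax $v^{\ast}$ of $F$. The recurrence at $v^{\ast}$ yields $F^{\ast} \leq 1 + (B_{v^{\ast}}/d)F^{\ast}$, which rearranges to $F^{\ast} \leq d/(d-B_{v^{\ast}})$ whenever $B_{v^{\ast}}<d$. In the easy range $k\leq d$ this bound applies because $B_{v^{\ast}}\leq k-1 < d$; a direct computation then shows $d/(d-k+1)\leq 2k$ for every $k\leq d$ (it reduces to $2k^2-2k \leq d(2k-1)$, which follows from $d\geq k$).

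The remaining case, $k>d$, is the main obstacle: a vertex with $B_v=d$ may exist and kills the one-step bound at that vertex. Here the structural identity $\sum B_v\leq k$ says that the ``bad'' set $R=\{v:B_v=d\}$ is small, $|R|\leq k/d$, and crucially $R\neq C$ once $d\geq 2$ (since $R=C$ would force $kd \leq k$). I would exploit this by constructing a chain $v_0 = v^{\ast}, v_1, v_2,\ldots$ where, while $v_i\in R$, $v_{i+1}$ is chosen to be an in-neighbor of $v_i$ in $D'\cap C$ realizing the minimum $F$-value. The recurrence at $v_i\in R$ forces the average, hence the minimum, of $F$ on the in-neighbors to be $F(v_i)-1$, so $F(v_{i+1})\leq F(v_i)-1$ along the chain. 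Since the $F$-values strictly decrease, the $v_i$ are pairwise distinct and the chain exits $R$ at some $v_j\notin R$ with $j\leq k-1$. Combining the telescoped inequality $F^{\ast}-j \leq F(v_j)$ with a local bound on $F(v_j)$ obtained from the one-step estimate now applied at $v_j$ (where $B_{v_j}<d$) should yield $F^{\ast}\leq (k-1)+d\leq 2k$, using $d\leq k$ in this regime.

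The step I expect to be delicate is the local bound at $v_j$: since $v_j$ is not in general the argmax of $F$, the naive estimate $F(v_j)\leq d/(d-B_{v_j})$ is not automatic and has to be derived via a Lyapunov/potential argument or by iterating the same construction from $v_j$ upward. If that closure step is not clean enough, a backup plan is to build a Lyapunov function $g(v)$ depending on $c$ and on the ``depth'' of $v$ in the nested hierarchy $R\supseteq R_1\supseteq R_2 \supseteq \cdots$ (with $R_{i+1} = \{v\in R_i:\text{all $d$ in-neighbors lie in }R_i\}$, which satisfies $|R_{i+1}|\leq |R_i|/d$ and is therefore empty by depth $O(\log_d k)$), and verify $g(v)-\tfrac{1}{d}\sum_{u} g(u)\geq 1$ together with $g\leq 2k$; then $F\leq g\leq 2k$ follows from the standard optional-stopping argument.
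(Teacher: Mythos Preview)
Your recurrence $F(v)=1+\tfrac{1}{d}\sum_{u\in N^-_{D'}(v)\cap C} F(u)$ is correct, and the argmax bound for $k\le d$ is fine. The trouble is in the $k>d$ case, and it is a genuine error rather than a merely delicate step. When you follow the \emph{minimum}-$F$ in-neighbor you obtain $F(v_{i+1})\le F(v_i)-1$, so telescoping gives $F(v_j)\le F^\ast - j$, which is the \emph{reverse} of the inequality $F^\ast-j\le F(v_j)$ that you wrote and then used. The direction you actually derive is useless for upper-bounding $F^\ast$: it merely says some vertex has small $F$. If instead you follow the \emph{maximum}-$F$ in-neighbor, you do get $F(v_j)\ge F^\ast - j$, but then strict decrease is lost and, in the unicyclic case, the chain can circle the unique (consistently oriented) cycle forever without ever exiting $R$. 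The nested $R\supseteq R_1\supseteq\cdots$ idea has the same obstruction: a directed cycle sitting entirely inside $R$ is a fixed point of the refinement ``all $d$ in-neighbors lie in $R_i$'' and never empties. So neither the primary argument nor the sketched backup closes the gap.

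The paper's argument sidesteps all of this by using the orientation combinatorially rather than through a potential function. Because the walk follows in-edges of a fixed orientation, it never backtracks; on a tree component it is therefore a simple path and must leave $C$ within $k$ steps \emph{deterministically} --- no expectation calculation needed. In the unicyclic case the only way to revisit a vertex is to traverse the unique cycle of length $\ell$, which costs probability at most $d^{-\ell}$ per lap, so the expected surplus from looping is $\ell/(d^\ell-1)\le k$, yielding $2k$ in total. Your analytic route could in principle be repaired, but only by importing exactly this structural observation (tree $\Rightarrow$ simple path; unicyclic $\Rightarrow$ geometric looping), at which point the recurrence machinery is no longer doing any work.
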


\begin{proof} 
Consider a random walk on $G'$ beginning in a size $k$ component $C$  of $G_S$. Note the edges  of $G'$ are oriented, so the walk does not back track. Therefore, if $C$ is a tree, a random walk can take at most $k$ steps in $C$ before leaving $C$. 

Next suppose $C$ has a cycle of length $\ell$. Since $C$ has only one cycle,  in any walk in $C$ all edges of the cycle must be visited consecutively (i.e. it is not possible to take a walk on the cycle, leave the cycle, and then return to the cycle). Let $v$ be the first vertex visited by the walk that is on the cycle. The expected number of times the walk completes the entire cycle at this point is
 $$ \sum_{i=1}^ \infty \bfrac{1}{d^\ell}^i \left( 1- \frac{1}{d^\ell}\right) i= \frac{1}{d^\ell-1}.$$
Since the walk cannot return to the cycle after the walk leaves the cycle, the expected length of the walk before it leaves $C$ is at most 
$$k + \frac{\ell}{d^\ell-1} \leq 2k. $$
\end{proof}

Finally, we prove \Cref{main}. 

\begin{proof} (of \Cref{main}).
The expected insertion time is the probability two randomly selected vertices $x$ and $y$ of $G'$ are saturated times the expected length of a replacement walk in the graph $G'$ starting a random saturated vertex. Since the replacement walk ends once an unsaturated vertex is reached, \Cref{walk bound} implies that the expected length of a replacement walk starting at a vertex in  a component of size $k$ is bounded above by $2k$. We compute
\begin{align*}
\E{ \text{ insertion time }} &\leq  2\sum_{i=1}^{k_0} \Pr{ x \in S} \Pr{y \text{ in component of size $k$ in $G_S$ }}k \\
&\leq 2\sum_{i=1}^{k_0} \frac{|S|}{n} \frac{ k \frac{n}{k^2} \exp{\frac{-\ve^2 d k}{M}}}{n} k \\ &\leq 2\sum_{i=1}^{\infty} \exp{-\frac{\ve^2 d k}{M}}\\
&=\frac{2}{1-e^{-\e^2/M}}\\
&\leq \frac{4M}{\e^2}
\end{align*}
\end{proof}
\section{Conclusion}
We have proved an $O(1)$ bound on the expected insertion time of a natural random walk insertion algorithm. The next step in the analysis of this algorithm is to reduce the dependence of $d$ on $\e$.

\end{document}